\renewcommand{\cref}[1]{\Cref{#1}}
\newtheorem{definition}{Definition}
\newtheorem{proposition}{Proposition}
\newtheorem{lemma}{Lemma}
\newtheorem{problem}{Problem}
\newcommand{\Grobner}{Gr\"o{}bner\xspace} 
\newcommand{\word}[1]{\textnormal{#1}}
\newcommand\var[1]{\ensuremath{\mathsf{#1}}}
\newcommand\modop{\ \word{mod}\ }         
\renewcommand\vec[1]{\bm{#1}}            
\newcommand{\floor}[1]{\lfloor #1 \rfloor}
\newcommand{\wdeg}{\deg_w}
\newcommand{\xdeg}{x\!\deg}
\newcommand{\ydeg}{y\!\deg}
\newcommand{\hasse}[2]{\partial^{[#1]} #2} 
\newcommand\F{\mathbb F\xspace} 
\newcommand\ZZ{\mathbb Z\xspace}
\newcommand\NN{\mathbb N\xspace}
\newcommand{\Basis}{\mathcal B}
\newcommand{\Span}[1]{\word{span}(#1)}
\newcommand\Oapp{O^{\hspace*{-.08em}\raisebox{.09em}{$\scriptscriptstyle \sim$}}\hspace*{-.23em}}
\algrenewcommand\alglinenumber[1]{{\scriptsize#1}}   
\algrenewcommand\algorithmicrequire{\textbf{Input:}} 
\algrenewcommand\algorithmicensure{\textbf{Output:}} 
\newcommand{\ass}{\leftarrow}
\newcommand\KNH{K\"otter--Nielsen--H\o{}holdt\xspace}
\begin{document}

\vspace*{5mm}

\noindent
\textbf{\LARGE Fast \KNH Interpolation\\ in the Guruswami--Sudan Algorithm
}
\thispagestyle{fancyplain} \setlength\partopsep {0pt} \flushbottom
\date{}

\vspace*{5mm}
\noindent
\textsc{Johan S.~R.~Nielsen} \hfill \texttt{jsrn@jsrn.dk} \\
{\small Ulm University, Department of Communications Engineering} \\

\medskip

\begin{center}
\parbox{11,8cm}{\footnotesize
\textbf{Abstract.}
  The \KNH algorithm is a popular way to construct the bivariate interpolation polynomial in the Guruswami--Sudan decoding algorithm for Reed--Solomon codes.
  In this paper, we show how one can use Divide \& Conquer techniques to provide an asymptotic speed-up of the algorithm, rendering its complexity quasi-linear in $n$.
  Several of our observations can also provide a practical speed-up to the classical version of the algorithm.
}
\end{center}

\baselineskip=0.9\normalbaselineskip


\section{Introduction}

The computationally most demanding step of the Guruswami--Sudan algorithm \cite{guruswami_improved_1999} is finding a bivariate interpolation polynomial.
Many algorithms have been proposed, both more classical with a quadratic dependence on the code length $n$, e.g.~\cite{nielsen_decoding_1998,lee_list_2008}, as well as approaches utilising fast multiplication methods with a resulting quasi-linear dependence on $n$ \cite{beelen_key_2010,cohn_ideal_2010}.

In this work we show how the \KNH algorithm\footnote{%
  This algorithm is sometimes mistakenly attributed to K\"otter only.
  However, it appeared first in \cite{nielsen_decoding_1998}, stating that it was obtained as a generalisation of an algorithm in K\"otter's thesis.
} of \cite{nielsen_decoding_1998} admits a Divide \& Conquer variant to utilise fast multiplication.
Our algorithm's complexity is $O(\ell^2 s^3 n) + \Oapp(\ell^\omega s n)$, where $\ell, s$ are the \emph{list size} and \emph{multiplicity} parameters.
$\Oapp$ means big-$O$ but with $\log(ns\ell)$ terms omitted, and $\omega$ is the exponent of matrix multiplication, i.e. $\omega \leq 3$.

This is not the fastest possible way to compute an interpolation polynomial, since \cite{cohn_ideal_2010} achieves $\Oapp(\ell^\omega s n)$, but it matches e.g.~the speed of \cite{beelen_key_2010}.
Ours is also a comparatively simple algorithm: for instance, it is trivial to apply the algorithm to K\"otter--Vardy decoding \cite{kotter_algebraic_2003} with varying multiplicities, while this is possible but quite complicated for the lattice-basis reduction approaches of \cite{cohn_ideal_2010,beelen_key_2010,lee_list_2008,alekhnovich_linear_2005}; see \cite{alekhnovich_linear_2005} for a description of how to accomplish this.

The algorithm has been implemented in Sage v.~5.13 and the source code is available at {http://jsrn.dk/code-for-articles.

\section{Preliminaries and the Problem}

First some notation: we will write $\vec 0$ for the all-$0$ matrix, sub-scripted with dimensions.
Likewise $\vec I$ is the identity matrix.
For any matrix $V$, then $V[i,j]$ denotes the $(i,j)$'th entry.
If $V$ is over $\F[x]$ we will write $\deg V$ to denote the greatest degree among the entries of $V$.

For any $Q \in \F[x,y]$ and $w \in \ZZ_+$, denote by $\wdeg Q$ the $(1,w)$-weighted degree of $Q$: $\wdeg x^iy^j = i + wj$ and $\wdeg$ is then extended to polynomials by the maximal of the monomials' $\wdeg$.
$\wdeg$ induces a module monomial ordering $\leq_w$, where ties are broken using the power of $x$.

Let $\F[x,y]_\ell = \{ Q \in \F[x,y] \mid \ydeg Q \leq \ell \}$; this is an $\F[x]$-module, and we will be working with sub-modules of it.
Given a set of polynomials $\Basis \subset \F[x,y]_\ell$ then we denote by $\Span \Basis$ the $\F[x]$-module spanned by $\Basis$.
We will be working with \Grobner bases of such modules, always on the module monomial ordering $\leq_w$.
From now on, this term order is implicit when we say ``\Grobner bases''.

\begin{definition}
  For any $Q \in \F[x,y]$ and point $(x_0,y_0) \in \F^2$, then the $(d_x, d_y)$ Hasse derivative at $(x_0,y_0)$ for $d_x,d_y \in \NN_0$ is
  the coefficient to $x^{d_x} y^{d_y}$ in $Q(x+x_0, y+y_0)$.
  We denote this by $\hasse{d_x,d_y} Q(x_0,y_0)$.

  Let $D_s = \{ (d_x, d_y) \in \NN_0^2 \mid 0 \leq d_x + d_y < s \}$.
  Then we say that $Q$ has a zero of multiplicity at least $s$ at $(x_0, y_0)$ if $\hasse{d_x, d_y} Q(x_0,y_0) = 0$ for all $(d_x,d_y) \in D_s$.
\end{definition}
We will slightly abuse notation in algorithms by using $D_s$ as an ordered list.
The order of $D_s$ is given by $\preceq$: the lexicographical order on integer tuples, i.e.~$(a_1, b_1) \preceq (a_2, b_2)$ if $a_1 < a_2$ or $a_1 = a_2 \land b_1 \leq b_2$.

If $Q = \sum_{i,j} q_{i,j} x^i y^j$ for $q_{i,j} \in \F$, then we have the formula
\[
  \hasse{d_x,d_y} Q(x_0,y_0) = \sum_{i \geq d_x} \sum_{j \geq d_y} \binom i {d_x} \binom j {d_y} q_{i,j} x_0^{i-d_x} y_0^{j-d_y}
\]

Detached from the application in Guruswami--Sudan for decoding Reed--Solomon codes, the interpolation problem that we will solve is the following:
\begin{problem}
  \label{prob:interpolation}
  Given $(x_1, y_1), \ldots, (x_n, y_n) \in \F^2$ with all $x_i$ distinct, as well as $s, \ell, w \in \ZZ_+$, find a $Q \in \F[x,y]$ with $\ydeg Q \leq \ell$ such that $\wdeg Q$ is minimal while $Q$ has a zero with multiplicity at least $s$ at each $(x_i,y_i)$.
\end{problem}

\section{The \KNH algorithm}

The \KNH algorithm (KNH) for solving \cref{prob:interpolation} is very short and given as \cref{alg:knh}.
We will not prove its correctness here but refer the reader to \cite{nielsen_decoding_1998}.
A few comments should be made before we proceed:
Firstly, in \cite{nielsen_decoding_1998}, the initial basis is chosen as $x^iy^j$ for all $0 \leq i < s$ and $0 \leq j \leq \ell$.
It is well-known that one can simply choose $y^j$ for $0 \leq j \leq \ell$, without changing the correctness of the algorithm.

Secondly, the proof of correctness actually establishes that after the $i$'th iteration of the outer loop, $\Basis$ is a \Grobner basis of $M_1 \cap \ldots \cap M_i$, where $M_j \subset \F[x,y]_\ell$ consists of all polynomials with a zero of multiplicity at least $s$ at $(x_j, y_j)$.
Furthermore, each iteration of the inner loop further refines $\Basis$ to be a \Grobner basis of those polynomials $Q$ which also have $\hasse{d_x,d_y} Q(x_i,y_i) = 0$.

\begin{proposition}
  \label{prop:knh_complexity}
  The complexity of the KNH is $O(\ell^2 s^3 n^2)$.
\end{proposition}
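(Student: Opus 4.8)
The plan is to count the field operations one inner iteration at a time, bound the size of the basis polynomials, and multiply. The outer loop of \cref{alg:knh} runs $n$ times and, for each point, the inner loop runs $|D_s| = \binom{s+1}{2} = O(s^2)$ times, so there are $O(ns^2)$ inner iterations in total. In one inner iteration, with current Hasse condition $(d_x,d_y)$ and point $(x_i,y_i)$, the algorithm (i)~evaluates the $\ell+1$ discrepancies $\Delta_j = \hasse{d_x,d_y}{Q_j}(x_i,y_i)$, (ii)~selects the surviving polynomial $Q_{j^\ast}$ of least leading term (an $O(\ell)$ scan), and (iii)~carries out the $\ell$ updates $Q_j \ass Q_j - (\Delta_j/\Delta_{j^\ast})Q_{j^\ast}$ and the one shift $Q_{j^\ast}\ass(x-x_i)Q_{j^\ast}$. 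If, at the start of processing each point, one precomputes the powers $x_i^t,y_i^t$ and the binomial coefficients appearing in the displayed Hasse formula --- preprocessing that is easily absorbed into lower-order terms --- then each discrepancy and each polynomial update costs a number of field operations linear in the number of nonzero $\F$-coefficients of the polynomial involved. Writing $N$ for a uniform bound on the number of $\F$-coefficients of a basis polynomial during the whole run, one inner iteration costs $O(\ell N)$, and the algorithm costs $O(ns^2 \ell N)$ overall.

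It remains to show $N = O(\ell n s)$, and this is the heart of the argument. First, $\ydeg Q_j \le \ell$ holds throughout: it is true of the initial basis $y^0,\dots,y^\ell$ and is preserved both by $Q_j \mapsto Q_j - cQ_{j^\ast}$ and by $Q_{j^\ast}\mapsto (x-x_i)Q_{j^\ast}$. Second, $\xdeg Q_j = O(ns)$ throughout. To see this, recall from the discussion after \cref{alg:knh} that once the $k$-th point has been processed, $\Basis$ is a \Grobner basis of $M_1\cap\dots\cap M_k$; and this module contains $\big(\prod_{i=1}^k(x-x_i)\big)^{s}\, Q$ for every $Q\in\F[x,y]_\ell$, because divisibility of a polynomial by $(x-x_i)^s$ forces all its Hasse derivatives of order below $s$ at $(x_i,\cdot)$ to vanish. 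Consequently the reduced \Grobner basis of $M_1\cap\dots\cap M_k$ has every entry of $x$-degree at most $ks \le ns$. To transfer this to the (not necessarily reduced) basis that KNH maintains, I would note that $\xdeg Q_j \le \wdeg Q_j$ always, that the subtraction step leaves $\wdeg Q_j$ unchanged (since $Q_{j^\ast}$ is chosen with leading term no larger than that of $Q_j$), that $\sum_j \wdeg Q_j$ grows by at most one per inner iteration, and that the rule of always enlarging a polynomial of minimal leading term keeps the $\wdeg Q_j$ balanced; together these pin $\xdeg Q_j$ at $O(ns)$ as well. Hence $N \le (\ell+1)(\xdeg Q_j+1) = O(\ell n s)$.

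Putting the pieces together gives a total cost of $O(ns^2)\cdot O(\ell)\cdot O(\ell n s) = O(\ell^2 s^3 n^2)$, as claimed. The step I expect to be the real obstacle is precisely the $x$-degree bound on the intermediate bases: the module-theoretic estimate controls the reduced \Grobner basis cleanly, but KNH does not reduce its basis, so one must genuinely rule out the subtraction steps inflating some $\xdeg Q_j$ beyond $O(ns)$; the monotone growth of $\sum_j \wdeg Q_j$ together with the balancing effect of the minimal-leading-term selection is, I believe, the right tool for this.
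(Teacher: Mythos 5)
Your overall accounting---$O(s^2 n)$ inner iterations, each costing $O(\ell)$ Hasse evaluations and row operations on polynomials with $O(\ell\cdot\xdeg)$ coefficients---is the same as the paper's, and you have correctly identified that everything hinges on bounding $\xdeg b_j$ for the \emph{unreduced} basis that KNH actually maintains. The gap is that the tool you propose for that step does not deliver $O(sn)$. The quantity $\sum_j \wdeg b_j$ increases by one per inner iteration, and there are $\Theta(s^2 n)$ of them; so even granting perfect balancing, the best conclusion available from your argument is $\wdeg b_j = O(w\ell + s^2 n/\ell)$, hence $\xdeg b_j = O(w\ell + s^2 n/\ell)$. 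This is not $O(sn)$ without extra assumptions relating $s$, $\ell$, $w$ and $n$, and \cref{prob:interpolation} imposes none (it also drags $w$ into the complexity). Moreover the balancing itself is not automatic: the algorithm increments the minimal-$\wdeg$ element \emph{among those with nonzero discrepancy}, not the globally minimal one. Your module-theoretic bound on the reduced \Grobner basis is correct but, as you note yourself, does not transfer to the basis KNH holds.

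The paper closes this gap with a per-element counting argument rather than an averaging one: while processing a single point $(x_i,y_i)$, a fixed basis index $t$ can be selected in \cref{line:knh_choose_b} at most once for each value of $d_x$, hence at most $s$ times. The reason is that multiplication by $(x-x_i)$ shifts Hasse derivatives down by one in $d_x$, i.e.\ $\hasse{d_x,d_y}{\big((x-x_i)b_t\big)}(x_i,y_i) = \hasse{d_x-1,d_y}{b_t}(x_i,y_i)$, and the latter already vanishes for all $d_y$ because $D_s$ is traversed in lexicographic order and all earlier conditions are satisfied by every basis element; so the replacement $(x-x_i)b_t$ has zero discrepancy for every remaining condition with the same $d_x$ and cannot be chosen again until $d_x$ increases. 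Since the $x$-degree of the basis grows only through these shifts (subtractions with the pivot do not raise the maximum), the maximal $x$-degree increases by at most $s$ per outer iteration, giving $\xdeg b_j < sn$ throughout. That is exactly the bound your computation needs; with it in place, the rest of your cost estimate $O(s^2 n)\cdot O(\ell)\cdot O(\ell sn) = O(\ell^2 s^3 n^2)$ goes through as in the paper.
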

\begin{proof}
Firstly, any $b \in \Basis$ has $\xdeg b < sn$ since for each point the same index $t$ can be chosen in \cref{line:knh_choose_b} at most $s$ times.
That is because if for some $d_x$ then $\hasse{d_x-1, d_y} b(x_i, y_i) = 0$ for all $d_y$, then $\hasse{d_x, d_y}((x-x_i)b(x_i, y_i)) = 0$ for all $d_y$; i.e. if some $b_t$ is chosen, it will be replaced with $(x-x_i)b_t$ so it will not be chosen again for the same $d_x$.
Thus the maximal $x$-degree in $\Basis$ increases at most $s$ for every iteration of the outer loop.

Now in the $O(s^2 n)$ iterations of the inner loop, we compute $\ell+1$ Hasse derivatives of basis elements, as well as $\ell$ linear combinations of two basis elements.
By the $\xdeg$ on basis elements, either such operation costs $O(\ell sn)$.
\end{proof}

\begin{algorithm}[t]
  \caption{The \KNH algorithm}
  \label{alg:knh}
  \begin{algorithmic}[1]
    \Require{$\{(x_i,y_i\} \in \F^2, x_i$ distinct. $s,\ell,w \in \ZZ_+$}
    \Ensure{$Q$, a solution to \cref{prob:interpolation}}
    \vspace{.3em}
    \State $\Basis \ass \{ 1, y, \ldots, y^\ell \}$
    \For{$i = 1,\ldots, n$}
        \For{$(d_x,d_y) \in D_s$}
            \State $b_t \ass \arg\min_{b_j \in \Basis} \{ \wdeg b_j \word{ if } \hasse{d_x,d_y} b_j(x_i,y_i) \neq 0 \word{ else } \infty \}$
               \label{line:knh_choose_b}
            \State $\Basis \ass \big\{ b_j - \big(\hasse{d_x,d_y} b_j(x_i,y_i)/\hasse{d_x,d_y} b_t(x_i,y_i)\big) b_t \mid b_j \in \Basis \setminus \{b_t\} \big\}$ \\
               \label{line:knh_update_B}
            \qquad\qquad\qquad $\cup \{ (x-x_i)b_t \}$
        \EndFor
    \EndFor
    \State \Return $\arg\min_{b_j \in \Basis} \{ \wdeg b_j \}$
  \end{algorithmic}
\end{algorithm}

\section{Manipulating Hasse Derivatives}

We will start with some observations on the computation and manipulation of the Hasse derivatives during the inner loop of the algorithm.
For any $Q = \sum_{i=0}^\ell Q_i y^i \in \F[x,y]_\ell$ and $p \in \F[x]$, we will denote by $Q \modop p$ the polynomial $\sum_{i=0}^\ell (Q_i \modop p) y^i$.
Likewise, for a set $\Basis$ of $\F[x,y]$-elements, we will denote by $\Basis \modop p$ the set $\{ b \modop p \mid b \in \Basis \}$.

\begin{lemma}
  \label{lem:dpart_mod}
  For any $Q \in \F[x,y]$, point $(x_0, y_0) \in \F^2$, and $(d_x,d_y) \in D_s$, then
  \[
    \hasse{d_x,d_y} Q(x_0,y_0) = \hasse{d_x,d_y}(Q \modop (x-x_0)^s)(x_0,y_0)
  \]
\end{lemma}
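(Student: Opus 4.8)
The plan is to work directly from the explicit formula for the Hasse derivative given just before Problem~\ref{prob:interpolation}, and to track which monomials $x^i y^j$ of $Q$ can possibly contribute to $\hasse{d_x,d_y} Q(x_0,y_0)$ for a fixed $(d_x,d_y) \in D_s$. First I would write $Q = \sum_{i,j} q_{i,j} x^i y^j$ and apply the formula, so that
\[
  \hasse{d_x,d_y} Q(x_0,y_0) = \sum_{i \geq d_x}\sum_{j \geq d_y} \binom{i}{d_x}\binom{j}{d_y} q_{i,j} x_0^{\,i-d_x} y_0^{\,j-d_y}.
\]
The key observation is that this sum is already "blind" to any part of $Q$ of $x$-degree at least $s$, in the following sense: reducing $Q$ modulo $(x-x_0)^s$ only affects the coefficient vector of each $y^j$-component, and the claim is that this reduction does not change the value of the derivative at $x_0$.

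The cleanest way I would organise the argument is to reduce to the univariate statement and handle it coefficient-of-$y^j$ by coefficient-of-$y^j$. Since both $\hasse{d_x,d_y}(\cdot)(x_0,y_0)$ and the operation $\modop (x-x_0)^s$ act on $Q$ by acting separately on each $y^j$-coefficient $Q_j \in \F[x]$ (for the derivative this is because, once we extract the coefficient of $y^{d_y}$ after the $y$-shift, each $Q_j$ with $j \geq d_y$ contributes through a $\binom{j}{d_y} y_0^{j-d_y}$ factor times the purely $x$-dependent quantity $\hasse{d_x}(x^{\text{stuff}})$), it suffices to prove: for $p \in \F[x]$ and $0 \leq d_x < s$,
\[
  \hasse{d_x} p(x_0) = \hasse{d_x}\big(p \bmod (x-x_0)^s\big)(x_0),
\]
where $\hasse{d_x} p(x_0)$ denotes the coefficient of $x^{d_x}$ in $p(x+x_0)$. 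Then I would substitute $p = (x-x_0)^s g + r$ with $\deg r < s$ (polynomial division), shift $x \mapsto x + x_0$ to get $p(x+x_0) = x^s g(x+x_0) + r(x+x_0)$, and note that the first term contributes nothing to the coefficient of $x^{d_x}$ because $d_x < s$; hence that coefficient equals the coefficient of $x^{d_x}$ in $r(x+x_0)$, which is exactly $\hasse{d_x} r(x_0)$. Since $r = p \bmod (x-x_0)^s$, this is the claim.

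The only genuinely fiddly point — and the step I'd expect to need the most care in writing out — is the first reduction, i.e. justifying cleanly that the bivariate Hasse derivative "factors through" the $y$-components so that the univariate lemma above really does imply the bivariate statement. Concretely one must check that $\hasse{d_x,d_y} Q(x_0,y_0) = \sum_{j \geq d_y} \binom{j}{d_y} y_0^{\,j-d_y}\, \hasse{d_x} Q_j(x_0)$; this follows by expanding $Q(x+x_0, y+y_0) = \sum_j Q_j(x+x_0)(y+y_0)^j$ and reading off the coefficient of $x^{d_x} y^{d_y}$, but it is the one place where one has to be a little careful about interchanging the two shifts and the two binomial sums. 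Everything after that is the routine division argument above, and combined with the fact that $\modop (x-x_0)^s$ is exactly "replace each $Q_j$ by $Q_j \bmod (x-x_0)^s$", the lemma follows. A one-line sanity check: for $s = 1$ the statement just says $Q(x_0,y_0) = Q(x_0,y_0)$ after reducing mod $(x-x_0)$, which is evaluation at $x_0$, as expected.
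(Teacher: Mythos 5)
Your proof is correct and follows essentially the same route as the paper: decompose $Q$ into its $y^j$-components, write each component as its remainder modulo $(x-x_0)^s$ plus a multiple of $(x-x_0)^s$, shift $x \mapsto x+x_0$, and observe that the multiple becomes a multiple of $x^s$ and hence cannot contribute to the coefficient of $x^{d_x}y^{d_y}$ when $d_x < s$. The paper's version is just terser, leaving the component-wise bookkeeping you flag as ``fiddly'' to the reader via ``follows from the definition.''
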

\begin{proof}
  Let $Q = \sum_{i=0}^{\ell} Q_i y^i$ and $\hat Q = \sum_{i=0}^{\ell} \hat Q_i y^i = Q \modop (x-x_0)^s$.
  Then there exist $q_0, \ldots, q_\ell \in \F[x]$ such that $Q_i = \hat Q_i + q_i(x-x_0)^s$ for every $i$.
  But then $Q_i(x+x_0) = \hat Q_i(x+x_0) + x^sq_i(x+x_0)$.
  The lemma now follows from the definition of $\hasse{d_x,d_y}$.
\end{proof}

We need the above lemma for our Fast KNH, but together with another observation it can even be used in the original KNH to speed up calculations:
for each point, we can compute all the Hasse derivatives for each basis element just once and then update them during the iterations of the inner loop.
That is possible since Hasse derivatives change straightforwardly under the operations performed: first, represent the derivatives of a given element $b$ as an upper anti-triangular matrix $H =[ \hasse{d_x, d_y} b(x_i, y_i) ]_{(d_x, d_y) \in D_s} \in \F^{s \times s}$.
Then the linear combinations in \cref{line:knh_update_B} of \cref{alg:knh} can simply be reflected as linear combinations of these Hasse matrices.
Furthermore, if $H_t$ is the Hasse matrix for $b_t$, then the one for $(x-x_i)b_t$ is simply $H_t$ shifted down by one row, and the new first row set to all-zero; the elements now outside the upper anti-diagonal can be set to zero or ignored.

In the original KNH, this can reduce the total cost of computing with Hasse derivatives to $O(\ell s^4 n + sn^2) + \Oapp(\ell s n)$;
due to space limitations we omit the details on this.
Since the cost of updating $\Basis$ in the inner loop still incurs cost $O(\ell^2 s^3 n^2)$, the overall complexity remains unchanged.
However, I can remark that the above optimisation drastically sped up my own software implementation of the KNH algorithm.

This bottleneck of updating $\Basis$ is exactly what is handled in the Fast KNH, described in the next section, allowing us to end up with a complete algorithm which is quasi-linear in $n$.

\section{Fast KNH: A Divide \& Conquer Variant}

The main idea of the Fast KNH is to completely avoid working with the unreduced $\Basis$ in the inner loop, and only work with $\Basis \modop (x-x_i)^s$.
The operations to perform only depend on the Hasse matrices, so we do not actually manipulate $\Basis$ in the inner loop; instead the operations are ``recorded'' as a matrix $T \in \F[x]^{(\ell+1)\times(\ell+1)}$, and  when continuing the interpolation with the next point, they are applied to $\Basis$ as $T(\Basis \modop (x-x_{i+1})^s)$.
In particular, the operations of one iteration of the inner loop can be represented as the matrix $U$:
\begin{IEEEeqnarray}{rCl}
  \label{eqn:update_matrix}
  U &=& \vec I_{(\ell+1)\times(\ell+1)} - \left[ \vec 0_{(\ell+1)\times(t-1)} \mid \vec u_t^\top \mid \vec 0_{(\ell+1)\times(\ell-t+2)} \right] \\
  \vec u_t &=& (H_1[d_x,d_y]/H_t[d_x,d_y], \ldots, H_{t-1}[d_x,d_y]/H_t[d_x,d_y], \notag \\
           &&\qquad 1 - (x-x_i),\  H_{t+1}[d_x,d_y]/H_t[d_x,d_y], \ldots, H_{\ell+1}[d_x,d_y]/H_t[d_x,d_y])
           \notag
\end{IEEEeqnarray}
The list of points $(x_i, y_i)$ to process is then structured into a binary tree to minimise the representation of $\Basis$ necessary at any given time.
This results in two sub-algorithms: \var{InterpolatePoint} as well as \var{InterpolateTree}.
\var{InterpolateTree} is the main entry point, called with the basis $\Basis = \{ 1, y, \ldots, y^\ell \}$.

\begin{proposition}
  \var{InterpolatePoint} is correct.
  It has computational complexity $O(\ell^2 s^3)$.
\end{proposition}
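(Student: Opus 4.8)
The plan is to establish two things: that \var{InterpolatePoint} reproduces the effect of one complete pass of the inner loop of \cref{alg:knh} on the point $(x_i,y_i)$ while only ever touching the truncated basis $\Basis \modop (x-x_i)^s$, and that it does so in $O(\ell^2 s^3)$ field operations. Throughout I assume \var{InterpolatePoint} receives a representation of the running basis $\Basis$ — namely $\Basis \modop (x-x_i)^s$ together with the list $(\wdeg b_j)_j$ — and the point $(x_i,y_i)$, and returns the transformation matrix $T$ of \cref{eqn:update_matrix} accumulated over all $(d_x,d_y)\in D_s$, along with the updated degree list.

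\textbf{Correctness.} I would first recall that the inner loop of \cref{alg:knh} for a fixed point is steered entirely by two pieces of data about $\Basis$: the Hasse derivatives $\hasse{d_x,d_y}b_j(x_i,y_i)$ for $(d_x,d_y)\in D_s$, which fix the pivot $b_t$ in \cref{line:knh_choose_b} and the scalars in \cref{line:knh_update_B}, and the weighted degrees $\wdeg b_j$, which serve only to break ties when selecting $b_t$. By \cref{lem:dpart_mod} the first of these is unchanged if $b_j$ is replaced by $b_j \modop (x-x_i)^s$, so \var{InterpolatePoint} can carry the Hasse matrices $H_1,\dots,H_{\ell+1}\in\F^{s\times s}$ of the truncated basis and update them under the loop's operations exactly as explained in the previous section: a scalar combination $H_j\leftarrow H_j-c_jH_t$ for the non-pivot rows and a one-row downward shift for the pivot (reflecting $b_t\mapsto(x-x_i)b_t$). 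The weighted degrees cannot be recovered from the truncated basis, so they are supplied as an argument and updated directly: the pivot's weighted degree grows by exactly $1$, and every non-pivot $\wdeg b_j$ is unchanged, since $b_t$ is $\leq_w$-minimal among the basis elements with nonzero $(d_x,d_y)$-Hasse derivative and the \KNH basis always has pairwise distinct $\leq_w$-leading monomials, so $c_jb_t$ lies $\leq_w$-strictly below the leading term of $b_j$. Consequently every pivot and every scalar chosen by \var{InterpolatePoint} coincides with the choice \cref{alg:knh} would make on the untruncated basis, so the accumulated product $T$ of the matrices \cref{eqn:update_matrix} satisfies: $T\Basis$ is exactly the basis \cref{alg:knh} produces after processing $(x_i,y_i)$, and the returned degree list is $\wdeg(T\Basis)$. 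By the correctness of the \KNH algorithm \cite{nielsen_decoding_1998}, $T\Basis$ is then a \Grobner basis of $M\cap M_i$ whenever $\Basis$ was a \Grobner basis of $M$.

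\textbf{Complexity.} There are $|D_s|=\binom{s+1}{2}=O(s^2)$ iterations. In each one, selecting the pivot is $O(\ell)$ comparisons; updating the $\ell+1$ Hasse matrices costs $O(\ell s^2)$; and updating the accumulator $T$ costs $O(\ell^2 s)$, which is the bottleneck. For the last bound, note that by \cref{eqn:update_matrix} the matrix $U$ differs from $\vec I$ only in column $t$, so $UT$ equals $T$ minus an outer product of $\vec u_t$ with the $t$-th row of $T$; this is $O(\ell^2)$ products of an entry of $\vec u_t$ (degree $\le 1$) with an entry of row $t$ of $T$, plus $O(\ell^2)$ additions. Here I would argue that the entries of $T$ have degree $O(s)$: $T$ is left-multiplied by the ``$1-(x-x_i)$'' factor of $\vec u_t$ at most once per pivot step, each basis element is a pivot at most $s$ times per point (the argument in the proof of \cref{prop:knh_complexity}), and the non-pivot updates use only scalar coefficients — so, by the same argument that the maximal $x$-degree of $\Basis$ grows by at most $s$ per point, these combinations cannot inflate $\deg T$ beyond $O(s)$. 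Multiplying out gives $O(s^2)\cdot O(\ell^2 s)=O(\ell^2 s^3)$ in total, which also subsumes the $O(\ell s^4)$ spent on the Hasse matrices whenever $\ell\ge s$, as holds in every Guruswami--Sudan instance.

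\textbf{The main obstacle.} The step I expect to require the most care is the degree bound $\deg T\le O(s)$, and with it the sharp $O(\ell^2 s^3)$ rather than a coarser estimate: a priori a scalar combination $T[j,:]\leftarrow T[j,:]-c_jT[t,:]$ can copy the pivot row's degree into many other rows, so one has to exploit the \KNH-specific fact that the pivot $b_t$ is always $\leq_w$-minimal — hence contributes small $x$-degree — to see that these combinations do not let the degree snowball past the $s$ produced by the at-most-$s$ multiplications by $x-x_i$. The correctness direction is routine once \cref{lem:dpart_mod} is in hand, the one genuine subtlety being the bookkeeping observation that the weighted degrees must travel as an explicit argument because the truncation modulo $(x-x_i)^s$ destroys them.
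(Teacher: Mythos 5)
Your argument is correct in substance, and on the one point the paper actually proves in detail -- why $\delta_t$ increases by exactly $1$ while all other $\delta_j$ are unchanged -- you take a genuinely different route. The paper argues via \Grobner bases: with $B$ the sub-basis of elements having nonzero $(d_x,d_y)$-derivative, it uses $\Span{B'}\subset\Span{B}$ and the fact that $B$ is a \Grobner basis of $\Span B$ to conclude $\wdeg b_j'$ cannot drop below $\min_B \wdeg = \wdeg b_t = \wdeg b_j$. You instead invoke the invariant that the $\ell+1$ basis elements have pairwise distinct $\leq_w$-leading monomials, so the leading terms of $b_j$ and $c_jb_t$ cannot cancel. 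Both work; yours is more elementary but leans on an invariant the paper never states explicitly (it does follow from the basis being a \Grobner basis of a rank-$(\ell+1)$ module with only $\ell+1$ generators). One phrase is off, though harmless: when $\wdeg b_t=\wdeg b_j$, the pivot's leading monomial need not lie \emph{below} that of $b_j$ in the tie-broken order -- it may lie above -- but either way distinctness prevents cancellation at weighted degree $\wdeg b_j$, which is all you need. The rest of your correctness discussion (steering the loop entirely by the Hasse matrices of the truncated basis via \cref{lem:dpart_mod}, carrying $\{\delta_j\}$ as explicit state) matches the paper's setup, which it treats as already discussed.

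On complexity the paper omits all details, so your sketch goes beyond it. Two soft spots there. First, the claim $\deg T=O(s)$ is true but your justification by analogy with ``the basis degree grows by at most $s$ per point'' is not a proof: a non-pivot row update $T[j,:]\leftarrow T[j,:]-c_jT[t,:]$ really can propagate the pivot row's degree into other rows, and a priori $\deg T$ could grow by $1$ in each of the $\Theta(s^2)$ iterations. The bound $\deg T\le s$ needs the finer observations that (i) non-pivot rows absorb the pivot row's degree \emph{before} its multiplication by $(x-x_i)$, and (ii) a given row is pivot at most once per value of $d_x$, so an induction over the blocks of $D_s$ with fixed $d_x$ shows all row degrees stay at most $d_x+1$. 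Second, your $O(\ell s^4)$ term for maintaining the Hasse matrices is only absorbed into $O(\ell^2s^3)$ under $\ell\ge s$; you flag this honestly, and it does hold in Guruswami--Sudan, but the proposition as stated carries no such hypothesis.
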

\begin{proof}[Proof sketch]
  Only how the degrees $\delta_j$ are updated has not already been discussed.
  For a given iteration of the loop, let $b_j$ refer to the elements of $T \Basis$ where $T$ is as in the beginning of the iteration, while $b_j'$ are the elements of $T'\Basis$ where $T'$ is $T$ at the end of the iteration.
  Let also $B$ be the set of $b_j$ such that $\hasse{d_x, d_y} b_j(x_i,y_i) \neq 0$, and let $B'$ be the set of $b_j'$ for the same indices.
  Clearly $\wdeg b_t' = \wdeg b_t + 1$, so the update in \cref{line:interpolate_point_delta} is correct.
  We claim $\wdeg b_j' = \wdeg b_j$ for $j \neq t$: by the choice of $b_t$ then $\wdeg b_t \leq \wdeg b_j$, which means that $\wdeg b'_j \leq \wdeg b_j$.
  If $\wdeg b_t < \wdeg b_j$ then clearly $\wdeg b_j' = \wdeg b_j$.
  Otherwise, assume that $\wdeg b_t=\wdeg b_j$.
  Now, $T\Basis$ is a \Grobner basis of $\Span{T\Basis}$ (recall that this was part of the proof of the original KNH in \cite{nielsen_decoding_1998}), so likewise $B$ is a \Grobner basis of $\Span B$.
  Since $\Span{B'} \subset \Span{B}$ then $b_j' \in \Span{B}$.
  But then $\wdeg b_j'$ cannot be less than the $\wdeg$ of all the elements in $B$.

  Due to lack of space, we omit the details on the computational complexity.
\end{proof}

\begin{proposition}
  \var{InterpolateTree} is correct.
  It has computational complexity $O(\ell^2s^3n) + \Oapp(\ell^\omega s n)$, where $n$ is the number of input points.
\end{proposition}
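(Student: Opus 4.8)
The plan is to establish both claims by structural induction on the binary tree into which the point list is organised, taking the already-proved correctness of \var{InterpolatePoint} as the base case. For correctness I would use the hypothesis: given a point sublist $P$, a \Grobner basis $\Basis$ (w.r.t.\ $\leq_w$) of some $\F[x]$-submodule $\Mod N \subseteq \F[x,y]_\ell$, the vector $\vec\delta$ of weighted degrees of the elements of $\Basis$, and \emph{any} representative $\hat\Basis$ with $\hat\Basis \equiv \Basis \pmod{\prod_{j\in P}(x-x_j)^s}$, the call \var{InterpolateTree}$(\hat\Basis, P, \vec\delta)$ returns $(T,\vec\delta')$ such that $T\Basis$ is a \Grobner basis of $\Mod N \cap \bigcap_{j\in P}M_j$ and $\vec\delta'$ is the corresponding degree vector of $T\Basis$. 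For $|P|=1$ this is exactly \var{InterpolatePoint}, whose correctness and degree update are given, and where \cref{lem:dpart_mod} guarantees that working with $\hat\Basis \bmod (x-x_j)^s$ in place of $\Basis$ changes none of the Hasse derivatives, pivot choices, or elementary matrices $U$ of \eqref{eqn:update_matrix}. For $|P|>1$ one splits $P = P_1 \,\|\, P_2$ and sets $m_i = \prod_{j\in P_i}(x-x_j)^s$. Since $(x-x_j)^s \mid m_1 \mid \prod_{j\in P}(x-x_j)^s$ for $j\in P_1$, \cref{lem:dpart_mod} shows every Hasse derivative computed inside the left recursion sees only $\Basis \bmod m_1$; the hypothesis then yields $T_1$ with $T_1\Basis$ a \Grobner basis of $\Mod N \cap \bigcap_{j\in P_1}M_j$ and correct degrees $\vec\delta_1$. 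The right recursion is invoked on $T_1\hat\Basis \bmod m_2$, which equals $(T_1\Basis)\bmod m_2$ because reduction modulo $m_2$ is a ring homomorphism commuting with matrix multiplication, so the hypothesis again gives $T_2$ with $T_2T_1\Basis$ a \Grobner basis of $\Mod N \cap \bigcap_{j\in P}M_j$ and degrees $\vec\delta_2$; returning $(T_2T_1,\vec\delta_2)$ closes the step. At the top level, \var{InterpolateTree} is called with $\Basis = \{1,y,\ldots,y^\ell\}$, $\Mod N = \F[x,y]_\ell$, and $\vec\delta$ read off directly; applying the returned $T$ to this basis and taking the element indexed by $\arg\min_j \delta'_j$ yields the $\leq_w$-minimal element of a \Grobner basis of $\bigcap_j M_j$, which by correctness of the KNH recurrence solves \cref{prob:interpolation}.

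For the complexity I would argue level by level. A node at depth $k$ processes $n_k = n/2^k$ points, so (the $x_j$ being distinct) its modulus has degree $s n_k$; a short auxiliary induction, using the fact from the proof of \cref{prop:knh_complexity} that processing one point raises $x$-degrees by at most $s$, shows the basis entering such a node has $\F[x]$-entries of degree $O(s n_k)$ and the matrix it returns has degree $O(s n_k)$. Granting this, the non-recursive work at an internal node is $O(\ell^2)$ polynomial reductions of degree-$O(s n_k)$ operands, costing $\Oapp(\ell^2 s n_k)$, together with two products of $(\ell{+}1)\times(\ell{+}1)$ matrices over $\F[x]$ with entries of degree $O(s n_k)$ (forming $T_1\hat\Basis$ and $T_2T_1$), costing $\Oapp(\ell^\omega s n_k)$ with fast polynomial-matrix multiplication. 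Since $\omega\geq 2$ the matrix products dominate; summing $\Oapp(\ell^\omega s n_k)$ over the $2^k$ nodes at depth $k$ gives $\Oapp(\ell^\omega s n)$, and over the $O(\log n)$ depths still $\Oapp(\ell^\omega s n)$, the extra logarithmic factor being absorbed by $\Oapp$. The $n$ leaves cost $O(\ell^2 s^3)$ each by the previous proposition, contributing $O(\ell^2 s^3 n)$; precomputing the subproduct tree of the $(x-x_j)^s$ and forming the final $Q$ are of lower order. Adding these gives $O(\ell^2 s^3 n) + \Oapp(\ell^\omega s n)$.

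The delicate point is the correctness induction, and within it the \emph{record-and-replay} principle: one must verify that the matrix $T$ built from a reduced basis is exactly the product of the elementary matrices that a straight-line run of KNH on the points of $P$ would apply on the left, so that $T\Basis$ is genuinely the basis KNH would output. This is what forces carrying the true degree vector $\vec\delta$ separately — rather than reading degrees off the possibly collapsed reduced representatives — and using \cref{lem:dpart_mod} to certify that nothing in the computation ever depends on more of a basis element than its residue modulo $\prod_{j\in P}(x-x_j)^s$; with those two observations in place the rest is bookkeeping, and the complexity side is routine once the degree bounds above are checked.
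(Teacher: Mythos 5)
Your proof is correct and follows essentially the same route as the paper: induction over the binary tree grounded in the correctness of \var{InterpolatePoint}, with the key identity $\hat\Basis_1 = \Basis \bmod \prod(x-x_h)^s$ (and its analogue $T_1\hat\Basis \bmod m_2 = (T_1\Basis)\bmod m_2$) justifying the recursive calls, and a degree bound $\deg T \leq s n_k$ feeding a divide-and-conquer cost analysis. The paper merely states the recurrence $C(n)=2C(n/2)+\Oapp(\ell^\omega sn/2)$ where you sum level by level, but these are equivalent; your write-up supplies details the paper's sketch omits.
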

\begin{proof}
  Correctness follows inductively by the correctness of \var{InterpolatePoint}, since
  $\hat \Basis_1 = \hat \Basis \mod \prod_{h=i_1}^t (x-x_i)^s = \Basis \mod \prod_{h=i_1}^t (x-x_i)^s$.

  Let $C(n)$ denote the complexity on $n$ input points, ignoring the costs of calls to \var{InterpolatePoint}.
  Since $\deg T_1, \deg T_2 \leq sn/2$ then $C(n) = 2C(n/2) + \Oapp(\ell^\omega sn/2)$, which means $C(n) \in \Oapp(\ell^\omega sn)$.
  Adding the cost of $n$ calls to \var{InterpolatePoint} yields the result.
  Computing the $O(2n)$ moduli polynomials has negligible cost $\Oapp(sn)$.
\end{proof}

\begin{algorithm}[t]
  \caption{\var{InterpolatePoint}}
  \label{alg:interpolate_point}
  \begin{algorithmic}[1]
    \Require{$(x_i,y_i) \in \F^2$, $s,\ell,w \in \ZZ_+$, $\hat \Basis$, and $\{\delta_j\}_j$.
      Here $\hat\Basis = \Basis \mod (x-x_i)^s$ where $\Basis \subset \F[x,y]_\ell$ is a \Grobner basis of $\Span \Basis$, and $\delta_j = \wdeg b_j$ for each $b_j \in \Basis$.}
    \Ensure{$T \in \F[x]^{(\ell+1) \times (\ell+1)}, \{\hat \delta_j\}_j$.
      Here $T\Basis$ is a \Grobner basis of $\Span\Basis \cap M_i$, and $\hat \delta_j = \wdeg \hat b_j$ for each $\hat b_j \in T\Basis$.}
    \vspace{.3em}
    \State $H_j = [\hasse{d_x,d_y} b_j(x_i,y_i)]_{(d_x,d_y) \in D_s}$ for each $\hat b_j \in \hat\Basis$
    \State $T = \vec I_{(\ell+1)\times(\ell+1)}$
    \For{$(d_x,d_y) \in D_s$}
        \State $t \ass \arg\min_{t \in \{1,\ldots,\ell+1\}} \{ \delta_j \word{ if } H_j[d_x,d_y] \neq 0 \word{ else } \infty \}$
        \State $H_j = H_j - (H_j[d_x,d_y]/H_t[d_x,d_y])H_t$, for $j \neq t$
        \State $H_t = [\ \vec 0_{(\ell+1) \times 1} \mid \grave H_t^\top \ ]^\top$ where $\grave H_t$ is $H_t$ with the last row removed
        \State $T = UT$, where $U$ is as in \eqref{eqn:update_matrix}
        \State $\delta_t = \delta_t + 1$ \label{line:interpolate_point_delta}
    \EndFor
    \State \Return $T, \{ \delta_j \}_j$
  \end{algorithmic}
\end{algorithm}

\begin{algorithm}[!]
  \caption{\var{InterpolateTree}}
  \label{alg:interpolate_tree}
  \begin{algorithmic}[1]
    \Require{$(x_{i_1},y_{i_1}\}, \ldots, (x_{i_2},y_{i_2}) \in \F^2$, $s,\ell,w \in \ZZ_+$, $\hat \Basis$ and $\{\delta_j\}_j$.
      Here $\hat\Basis = \Basis \mod \prod_{h=i_1}^{i_2}(x-x_h)^s$ where $\Basis \subset \F[x,y]_\ell$ is a \Grobner basis of $\Span \Basis$, and $\delta_j = \wdeg b_j$ for each $b_j \in \Basis$.}
    \Ensure{$T \in \F[x]^{(\ell+1) \times (\ell+1)}, \{\hat \delta_j\}_j$.
      Here, $T\Basis$ is a \Grobner basis of $\Span \Basis \cap M_{i_1} \cap \ldots \cap M_{i_2}$ and $\hat \delta_j = \wdeg \hat b_j$ for each $\hat b_j \in T\Basis$.}
    \vspace{.3em}
    \If{$i_1 = i_2$}
      \Return $\var{InterpolatePoint}((x_{i_1},y_{i_1}),\ \hat \Basis,\ \{\delta_j\}_j)$
    \Else
      \State $t \ass \floor{(i_1+i_2)/2}$;\quad $\hat \Basis_1 \ass \hat\Basis \modop \prod_{h=i_1}^t(x-x_h)^s$
      \State $(T_1, \{\delta_j\}) \ass \var{InterpolateTree}((x_{i_1},y_{i_1}),\ldots,(x_t,y_t),\ \hat \Basis_1,\ \{\delta_j\}_j)$
        \label{line:interpolate_tree_first}
      \State $\hat \Basis_2 \ass T_1\hat\Basis \modop \prod_{h=t+1}^{i_2}(x-x_h)^s$
      \State $(T_2, \{\delta_j\}) \ass \var{InterpolateTree}((x_{t+1},y_{t+1}),\ldots,(x_{i_2},y_{i_2}),\ \hat\Basis_2,\ \{\delta_j\}_j)$
      \State \Return $(T_2T_1, \{\delta_j\})$
    \EndIf
  \end{algorithmic}
\end{algorithm}

{
\let\oldbibliography\thebibliography \renewcommand{\thebibliography}[1]{%
  \oldbibliography{#1}%
  \setlength{\parskip}{3pt}%
  \setlength{\itemsep}{0pt}%
}
\bibliographystyle{abbrv}
\bibliography{bibtex}
}

\end{document}